\newtheorem{definition}{Definition}
\newtheorem{proposition}{Proposition}
\newtheorem{lemma}{Lemma}
\newtheorem{theorem}{Theorem}
\newtheorem{corollary}{Corollary}
\newcommand{\cclass}[1]{\ensuremath{\mbox{\textup{#1}}}\xspace}
\newcommand{\BPP}{\cclass{BPP}}
\newcommand{\RP}{\cclass{RP}}
\newcommand{\coRP}{\cclass{coRP}}
\newcommand{\NP}{\cclass{NP}}
\newcommand{\coNP}{\cclass{coNP}}
\newcommand{\N}{\mathbb{N}}
\newcommand{\PP}{\mathcal{P}}
\newcommand{\OCT}{\textsc{Odd Cycle Transversal}\xspace}
\newcommand{\EBip}{\textsc{Edge Bipartization}\xspace}
\newcommand{\DFVS}{\textsc{Directed Feedback Vertex Set}\xspace}
\newcommand{\I}{\ensuremath{\mathcal I}\xspace}
\newcommand{\B}{\ensuremath{\mathcal B}\xspace}
\newcommand{\F}{\ensuremath{\mathbb F}\xspace}
\newcommand{\Q}{\ensuremath{\mathcal{Q}}\xspace}
\newcommand{\X}[1]{X'(#1)}
\newcommand{\containment}{\NP~$\subseteq$~\coNP/poly\xspace}
\newcommand{\BigOh}{\mathcal{O}}
\newcommand{\yes}{YES\xspace}
\newcommand{\no}{NO\xspace}
\title{Compression via Matroids: A Randomized Polynomial Kernel for Odd Cycle Transversal}
\author{Stefan Kratsch\thanks{Supported by the Netherlands Organization for Scientific Research (NWO), project ``KERNELS: Combinatorial Analysis of Data Reduction''.} \thanks{Utrecht University, Utrecht, the Netherlands,
\texttt{kratsch@cs.uu.nl}}
\and Magnus Wahlstr\"om\thanks{Max-Planck-Institute for Informatics, Saarbr\"ucken, Germany,
\texttt{wahl@mpi-inf.mpg.de}}}
\begin{document}

 \maketitle

\begin{abstract}
{\small 
The \textsc{Odd Cycle Transversal} problem (OCT) asks whether a given graph
can be made bipartite by deleting at most~$k$ of its vertices. In a
breakthrough result Reed, Smith, and Vetta (Operations Research Letters, 2004)
gave a~$\BigOh(4^kkmn)$ time algorithm for it, the first algorithm with
polynomial runtime of uniform degree for every fixed~$k$. It is known that this
implies a polynomial-time compression algorithm that turns OCT instances into
equivalent instances of size at most~$\BigOh(4^k)$, a so-called
kernelization. Since then the existence of a polynomial kernel for OCT, i.e.,
a kernelization with size bounded polynomially in~$k$, has turned into one of
the main open questions in the study of kernelization. Despite the impressive
progress in the area, including the recent development of lower bound
techniques (Bodlaender et al., ICALP 2008; Fortnow and Santhanam, STOC 2008) and
meta-results on kernelizations for graph problems on planar and other sparse
graph classes (Bodlaender et al., FOCS 2009; Fomin et al., SODA 2010), the existence of
a polynomial kernel for OCT has remained open, even when the input is
restricted to be planar.

This work provides the first (randomized) polynomial kernelization for OCT.
We introduce a novel kernelization approach based on matroid theory, where we
encode all relevant information about 
a problem instance into a matroid with a
representation of size polynomial in~$k$. 
For OCT, the matroid is built to allow us 
to simulate the computation of the iterative compression step of the algorithm of 
Reed, Smith, and Vetta, applied (for only one round) to an approximate odd cycle transversal which it is aiming to shrink to size~$k$.
The process is randomized with one-sided error exponentially small in~$k$,
where the result can contain false positives but no false negatives,
and the size guarantee is cubic in the size of the approximate solution.
Combined with an~$\BigOh(\sqrt{\log n})$-approximation (Agarwal et al., STOC 2005),
we get a reduction of the instance to size~$\BigOh(k^{4.5})$, implying a
randomized polynomial kernelization.  Interestingly, the known lower bound
techniques can be seen to exclude randomized kernels that produce no false
negatives, as in fact they exclude even co-nondeterministic kernels (Dell and van
Melkebeek, STOC 2010). Therefore, our result also implies that deterministic kernels
for OCT cannot be excluded by the known machinery. 
}
\end{abstract}

\section{Introduction}

One of the most successful (and natural) applications of parameterized
complexity is the study of combinatorially hard problems for the case that one
seeks a small solution. Such a problem is \emph{fixed-parameter tractable} (FPT)
if it can be checked in time~$f(k)n^c$ whether an instance of size~$n$ has a
solution of size at most (or at least)~$k$. When~$k$ is not too large, such an
algorithm can be considered efficient. This can be especially important for
minimization problems where the solution size corresponds to a real-world
cost. 

Curiously, for any decidable problem, having an FPT algorithm is known to
coincide with having a polynomial-time data reduction algorithm that reduces
any instance to an equivalent instance with size bounded by some function of~$k$, a so-called
\emph{kernelization}.\footnote{Indeed, running the 
  algorithm for~$n^{c+1}$ steps will either solve the instance, or allow the
  conclusion that~$f(k)n^c>n^{c+1}$ and consequently that~$n$ is bounded
  by~$f(k)$. The converse, a kernelization implying an FPT algorithm for any
  decidable problem, can be easily verified, although here the bound gets worse.}
However, 
the kernel size bound implied is the same
function~$f(k)$ as occurs in the running time bound, which for non-trivial
parameters will almost certainly be exponential in~$k$ unless~P~$=$~NP.

A more useful notion of efficient data reduction is \emph{polynomial}
kernels, i.e., kernelizations with kernel size bounded polynomially in the
parameter.  For many problems, this can be achieved by a direct study of
kernelization, e.g., the classic reduction of \textsc{Vertex Cover} to~$2k$
vertices~\cite{NemhauserT1975,ChenKJ01}, or the recent reduction of
\textsc{Feedback Vertex Set} to size~$\BigOh(k^2)$ by Thomass\'e~\cite{Thomasse10},
improving on work by Burrage et al.~\cite{BurrageEFLMR06} and Bodlaender~\cite{Bodlaender07}.  
Having small (polynomial) kernels provides a formalization of efficient data reduction,
and additionally, producing them often requires significant insight into the combinatorial
structure of a problem. 

Accordingly, the search for more and better kernelizations has evolved into a 
main branch of parameterized complexity (in fact, the opinion has been raised
that kernelization is what fixed-parameter tractability is really
about~\cite{Estivill-CastroFLR05}).  In particular, the existence of a
polynomial kernelization for a problem is seen as a significant threshold,
comparable to the existence of an FPT algorithm in the first place.  Recent
seminal work of 
Bodlaender et al.~\cite{BodlaenderDFH09} and 
Fortnow and Santhanam~\cite{FortnowS11} enforced the importance of this 
threshold by providing techniques to show that certain problems do not admit
polynomial kernels unless \containment and the polynomial hierarchy collapses
to its third level (see also Harnik and Naor~\cite{HarnikN10} for a related
question). Furthermore, a paper by Dell and van
Melkebeek~\cite{DellM10} was the first work to provide lower
bounds for the \emph{degree} of a polynomial kernelization; among other things,
their work implies an~$\BigOh(k^2)$ lower bound for
\textsc{Feedback Vertex Set} and \OCT.
Another recent focus has been \emph{meta kernelizations}, i.e.,
meta-level results that provide kernelizations for a large range of problems,
under restrictions on the input~\cite{BodlaenderFLPST09,FominLST10}
; see below. 

Still, for all this work, some problems have so far resisted classification
with respect to existence of polynomial kernels. Among these, emerging as the most important and most frequently raised
questions -- e.g., the two problems singled out as having the highest
importance at the recent workshop on kernelization, WorKer 2010 -- is the existence of 
polynomial kernelizations for the problems \OCT (OCT) and \DFVS (DFVS).
Both problems are also open even in the restricted case of planar
graphs~\cite{BodlaenderFLPST09-metakernelization-arxiv}. 
In this paper, we focus on OCT, where the question was first raised in~\cite{GuoGHNW06};
see also the recent survey on lower bounds for kernelization by Misra et al.~\cite{MisraRS11}. 

\paragraph{The Odd Cycle Transversal problem.}
The \OCT problem asks whether a given graph~$G$ can be made bipartite by
deleting at most~$k$ of its vertices.  Together with natural variants such as
\textsc{Edge Bipartization}, the edge deletion version, and \textsc{Balanced Subgraph}, 
the problem of removing odd-parity cycles in signed graphs, this problem has
numerous applications (see, e.g.,~\cite{Huffner09,HuffnerBN10}), and has received significant research
attention~\cite{ReedSV04,GuoGHNW06,FioriniHRV08,Guillemot08,Huffner09,LokshtanovSS09,HuffnerBN10,KawarabayashiR10-oct}.
With respect to parameterized and exact computation the breakthrough result
was the~$\BigOh(4^kkmn)$ time algorithm by Reed, Smith, and Vetta~\cite{ReedSV04}.\footnote{H\"uffner~\cite{NEWERVERSIONEXISTS_Huffner05} observed that the analysis could be improved to yield~$\BigOh(3^kkmn)$.}
This was the first
occurrence of the technique now called \emph{iterative compression}.
(Note that this is entirely distinct from the notion of \emph{instance compression}, 
which has been used as a term to generalize kernelization; see related work below.)
This technique also led to the first FPT algorithm for \DFVS by Chen et
al.~\cite{ChenLLOR08} and \textsc{Almost 2-SAT} by Razgon and O'Sullivan~\cite{RazgonO09}, and has become an important tool in parameterized
complexity; see the survey by Guo et al.~\cite{GuoMN09}. For \EBip, an~$\BigOh(2^km^2)$ time algorithm exists due to Guo et al.~\cite{GuoGHNW06}.
The best known approximation result for OCT is~$\BigOh(\sqrt{\log n})$ due to
Agarwal et al.~\cite{AgarwalCMM05-approx}, improving on earlier results with a ratio of~$\BigOh(\log n)$~\cite{GargVY96}.
For \EBip there is also
an~$\BigOh(\log \textrm{OPT})$-approximation by
Avidor and Langberg~\cite{AvidorL07-approx-klogk}.
Under the unique games conjecture, neither problem can have a constant-factor approximation~\cite{Khot02}.

\paragraph{What makes OCT special?} 
The OCT problem 
belongs to the class of graph modification problems,
i.e., finding a minimum number of modifications of vertices and/or
edges in a given graph to achieve a given property (like bipartiteness). 
If the target property is sparse (e.g., being a
forest for \textsc{Feedback Vertex Set}), or if it can be defined by a
constant number of forbidden structures (e.g., induced paths on three
vertices for \textsc{Cluster Editing}), then many such problems
are known to have a polynomial kernel, although the kernelization is
by no means always easy.  On the other hand, only few polynomial kernels are
known for target properties that do not have either characteristic.
One candidate is deleting at most~$k$ arcs to remove all directed cycles from a 
tournament\footnote{A directed graph is a tournament if for every pair
  of its vertices, exactly one of the two possible arcs is present.},
but here it suffices to consider the directed triangles~\cite{BessyFGPPST09}. 
Also, \textsc{Chordal Completion} has a polynomial kernel~\cite{NatanzonSS00}, but here,
a large obstacle is helpful in that a chordless cycle of length~$t$ requires
at least~$t-2$ edges to be added.
No such useful exception exists for OCT.

Additionally, there are involved meta-results for graph problems restricted to sparse inputs like planar, bounded genus, and~$H$-minor free graphs~\cite{BodlaenderFLPST09,FominLST10}. However, it can be easily seen that OCT is neither \emph{compact} nor \emph{quasi-compact} (\cite{BodlaenderFLPST09}), as \yes-instances do not have bounded treewidth (take an arbitrarily large grid graph and add a few edges that cause odd cycles); similarly it is not \emph{minor bidimensional} (\cite{FominLST10}) as the cost on a grid is zero. Thus, even for planar inputs, it is not covered by any known meta-result.
Generally, similar statements as the above can be made about DFVS.

\paragraph{Our work.}
In this paper, we give a randomized polynomial kernelization for OCT, for
unrestricted input.  
The kernel takes the form of a \emph{compression} of the instance into a
polynomial-sized matrix (with bounded entry lengths), such that the independent sets of columns in
the matrix reveal whether the instance is positive or negative.  By the \NP-completeness of OCT,
this then implies a randomized polynomial kernel. 
The result is produced by combining the iterative compression step of Reed, Smith, and Vetta~\cite{ReedSV04}
(in a suitable variant) with the theory of linearly represented matroids, specifically the class
called gammoids~\cite{Perfect1968}. 
We observe that, given a graph~$G$ and a set of terminals~$X$, a single gammoid can be used to produce a matroid that
encodes the flow from~$S$ to~$T$ in~$G-R$ for arbitrary~$S,T,R \subseteq X$,
and show (using results of Marx~\cite{Marx09-matroid}) how to produce a matrix representing this
matroid, of total size cubic in~$|X|$ and logarithmic in~$|G|$, in randomized polynomial time.
Having access to this information is sufficient to simulate the algorithm of~\cite{ReedSV04}. 
Here,~$X$ is an initial approximate solution to the problem. 
To get this initial set~$X$, we use the~$\BigOh(\sqrt{\log n})$-approximation of 
Agarwal et al.~\cite{AgarwalCMM05-approx} to produce
an~$\BigOh(\sqrt{\textrm{OPT}})$-approximation, giving an~$\BigOh(k^{4.5})$-size randomized
polynomial compression and completing the kernelization.
This approach of applying matroid theory to kernelization is a new tool in the field, and should
prove useful for several future kernelization results as well. 
This is also one of very few randomized polynomial kernelizations
(we are aware only of Harnik and Naor's probabilistic compression for
\textsc{Subset Sum}~\cite{HarnikN10}).
Our result also implies an~$\tilde \BigOh(k^3)$ compression for the
\textsc{Tanglegram Layout} problem from computational 
biology~\cite{FernauKP10} via the reduction given in~\cite{BockerHTW09};
a polynomial kernel for this problem was left open in~\cite{BockerHTW09}.

\paragraph{Related work.}
Generally, not much is known yet about excluding polynomial kernels
for graph modification problems, compared to the wide range of
problems that belong to this class.  Although a few kernelization
lower-bounds~\cite{DomLS09,KratschW09,DellM10,GuillemotPP10} and FPT
infeasibility results~\cite{Lokshtanov08} exist, for many problems in
this class, including \OCT and \DFVS, the question of polynomial kernels
is open. 
Another related set of problems is graph separation problems; here as well,
there are many FPT problems where the existence of polynomial kernels is
unknown, such as \textsc{Multiway Cut}~\cite{Marx06,ChenLL09} and
\textsc{Multicut}~\cite{BousquetDT11,MarxR11}.

Kernels for OCT for \emph{non-standard} parameters are studied by
Jansen and Kratsch~\cite{JansenK11}; a
polynomial kernel is obtained for the case that one is given an OCT instance~$(G,k)$ as well as a set~$X$ such that~$G-X$ is both bipartite and bounded
treewidth, with parameter~$|X|$.  The paper also contains related lower bounds (e.g.,
if~$G-X$ is treewidth~$2$ but not necessarily bipartite).

Harnik and Naor~\cite{HarnikN10} raised the question of compression of
\NP instances with respect to the witness size (e.g.,
size~$\BigOh(k\log n)$ for specifying a subset of size~$k$).
As they note, polynomial kernelization with the witness size as
parameter is equivalent to their notion of deterministic compression.
See Fortnow and Santhanam~\cite{FortnowS11} for further discussion
comparing the approaches (and note also that the factor of~$\log n$ can
be absorbed into~$k$ if a problem is FPT with running time~$2^{k^{\BigOh(1)}}n^c$).
Both Harnik and Naor~\cite{HarnikN10} and Fortnow and
Santhanam~\cite{FortnowS11} also give notions of probabilistic
compression; the one we use is closer to~\cite{FortnowS11}, though we
restrict ourselves to one-sided error (see Section~\ref{section:preliminaries}).
However, as there are almost no further examples of randomized
polynomial kernelization or compression, there will be plenty of time for settling 
notation later. 
In related work, compressions are instead called \emph{bikernels}~\cite{AlonGKSY10} or \emph{generalized kernelizations}~\cite{BodlaenderDFH09}.

Connections between parameterized complexity and matroid theory have
previously been studied by Marx~\cite{Marx09-matroid}, including a self-contained description of representation tools
and issues for matroids. For more on matroid theory and algorithmic aspects see Oxley~\cite{Oxley-matroidtheory} as well as Schrijver~\cite{Schrijver}.

\section{Preliminaries}\label{section:preliminaries}

\subsection{Parameterized complexity and kernelization}\label{section:prelim:kernels}

We use the following standard notation from parameterized complexity, for more
background on this area we refer the reader
to~\cite{DowneyF1998,FlumG06,Niedermeier06}. A \emph{parameterized problem}
over alphabet~$\Sigma$ is a language~$\Q\subseteq\Sigma^*\times\N$; the second
component of instances~$(x,k)\in\Sigma^*\times\N$ is called the parameter. A \emph{kernelization} (or \emph{kernel})
\emph{of~$\Q$} is a polynomial-time computable
mapping~$K:\Sigma^*\times\N\to\Sigma^*\times\N:(x,k)\mapsto(x',k')$ such
that~$(x,k)\in\Q$ if and only if~$(x',k')\in\Q$ and with~$|x'|,k'\leq h(k)$ where~$h$ is a computable function;~$h$ is called the \emph{size} of the kernel.
A kernelization is a \emph{polynomial kernelization} if the size~$h(k)$ is 
polynomially bounded.

We use the term \emph{(parameterized) compression of~$\Q$ (into~$\Q'$)}
to denote the relaxed variant where~$K$ is allowed to map to a different
language~$\Q'$ (also called \emph{bikernel}~\cite{AlonGKSY10} or \emph{generalized kernelization}~\cite{BodlaenderDFH09}).
When~$\Q'$ is in \NP and~$\Q$ with parameter coded in unary is \NP-complete then, using the implicit Karp reduction, a polynomial compression of~$\Q$ into~$\Q'$ implies a polynomial kernelization for~$\Q$~\cite{BodlaenderTY09}.

We define a natural randomized version of kernelization with one-sided error,
corresponding to the complexity class \coRP
(variants for \RP and \BPP could be defined similarly).
Our notion of polynomial \coRP-compression is essentially equivalent to that of probabilistic compression in~\cite{FortnowS11},
except for our one-sided error, and that~\cite{FortnowS11} defines the parameter to be given in unary.

\begin{definition}[\coRP-kernelization]
Let~$\Q\subseteq\Sigma^*\times\N$. 
A randomized polynomial-time algorithm~$K$ with inputs and outputs in~$\Sigma^*\times\N$ is a \emph{randomized kernelization without false negatives}, or \emph{\coRP-kernelization}, for~$\Q$ if there is a computable function~$h:\N\to\N$ such that for all~$(x,k)\in\Sigma^*\times\N$:
\begin{enumerate}
\item if~$(x,k)\in\Q$ then~$prob[K((x,k))\in\Q]=1$,
\item if~$(x,k)\notin\Q$ then~$prob[K((x,k))\notin\Q]\geq\frac{1}{2}$, and
\item the size of~$x'$ and the value of~$k'$ are bounded by~$h(k)$, where~$(x',k'):=K((x,k))$.
\end{enumerate}
The notions of \coRP-compression and polynomial \coRP-compression are defined in the natural way.
\end{definition}

Note that unlike algorithms for \BPP, \RP, or \coRP we cannot use majority,
disjunction, or conjunction over the outputs of~$N$ independent runs to boost the success probability, since kernelizations and compressions typically do not solve instances.
Harnik and Naor~\cite{HarnikN10} observed that a similar effect may be attained by making a combined instance from the result of~$N$ independent runs of the compression (e.g., in our setting, creating an output which is to be interpreted as~$((x_1,k_1)\in \Q) \land \ldots \land ((x_t,k_t) \in \Q)$).
Strictly speaking, this approach gives a compression,
but it can again be turned into a kernelization by the argument via the Karp reduction.

\subsection{Matroids}

Matroids are interesting combinatorial structures, generalizing the notion of independence from 
linear algebra, while also drawing from graph theory. 
There is an extensive theory of matroids, as well as several important algorithmic results; 
see Oxley~\cite{Oxley-matroidtheory} and Schrijver~\cite{Schrijver}.

A \emph{matroid} is a pair~$M=(E,\I)$, where~$E$ is the \emph{ground set} and~$\I \subseteq 2^E$ 
a collection of \emph{independent sets}, 
such that: (i) $\emptyset \in \I$; (ii) if~$I_1 \subseteq I_2$ and~$I_2 \in
\I$, then~$I_1 \in \I$; and (iii) if~$I_1,I_2 \in \I$ and~$|I_2|>|I_1|$, then
there exists some~$x \in (I_2 \setminus I_1)$ such that~$I_1 \cup \{x\} \in \I$.
A set~$I \subseteq E$ is \emph{independent} if~$I \in \I$, and \emph{dependent} otherwise.
A set~$B \in \I$ is a \emph{basis} of~$M$ if no superset of~$B$ is independent;
a matroid may equivalently be defined by its set of bases (among other variants).
Let~\B be the set of bases of~$M$, and~$\B^*=\{E \setminus B: B \in \B\}$.
Then~$\B^*$ is the set of bases of a matroid~$M^*$, called the \emph{dual} of~$M$.
Note that~$(M^*)^*=M$.

Let~$A$ be a matrix over a field~$\F$ and~$E$ be the set of columns of~$A$.
Let~$\I$ be the set of all sets~$X \subseteq E$ of columns that are linearly independent over \F (as vectors).
Then~$(E,\I)$ defines a matroid~$M$, and we say that~$A$ \emph{represents}~$M$.
A matroid is \emph{representable} (over a field \F) if there is a matrix (over \F) that represents it.
A matroid representable over some field is called \emph{linear}.
In this work, we will concern ourselves only with linear matroids.
From a representation of~$M$, one can easily get a representation of~$M^*$
over the same field.

Finally, we define \emph{minors} of a matroid.  For a matroid~$M=(E,\I)$ and a set~$T \subseteq E$,
\emph{deleting}~$T$ results in a matroid~$M\setminus T = (E \setminus T,\I')$ where~$\I'=\{I \in \I: I \subseteq E\setminus T\}$.
\emph{Contracting}~$T$ results in a matroid~$M/T=(M^*\setminus T)^*$; if~$T \in \I$, then the independent sets
of~$M/T$ are the sets~$X \subseteq E\setminus T$ such that~$X \cup T \in \I$. 
A \emph{minor} of a matroid~$M$ is any matroid produced from~$M$ by deletions and contractions.
Both operations can be performed with preserved representation.

\section{Polynomial encoding of terminal cuts using gammoids}\label{section:newmatroidsection}

The basic situation that is handled in this section is the following.
Let~$D=(V,A)$ be a directed graph, and let~$X \subseteq V$ be a set of terminals.
We want to reduce the graph to a size polynomial in~$|X|$ and~$\log |V|$, while
preserving the size of a minimum vertex cut~$(S,T)$ for all sets~$S, T \subseteq X$.
Here, a vertex cut is understood as being allowed to delete vertices of~$S$ or~$T$ as well as
other vertices of~$V$; thus the min-cut sizes are bounded by~$|X|$. As an extension, we will also 
require that we may specify any set~$R\subseteq X$ as removed, i.e., we want to have also the 
cuts~$(S,T)$ in~$D-R$.

Clearly, this question is closely connected to the search for polynomial kernels for FPT cut problems.
However, a direct combinatorial reduction to achieve this, e.g., via edge contractions and vertex deletions or
other direct simplifications on the graph, seems difficult.  It is not even clear whether
there always exists a graph of the required size, where every minimum~$(S,T)$-cut for~$S,T
\subseteq X$ has the same cardinality as in~$D-R$.
Instead, we here solve the question by introducing the use of matroids and matroid representations
to the field of kernelization.  

Let us recall a few helpful definitions.  For~$S,T\subseteq V$, the set~$T$ is 
\emph{linked to}~$S$ if there exist~$|T|$ vertex-disjoint paths from~$S$ to~$T$, 
where also the end points of the paths must be disjoint. The sets~$S$ and~$T$ do not need to be disjoint; a
vertex is linked to itself by a path of length zero.
By the cut-flow duality, it is clear that being able to find the linked subsets 
of~$X$ will suffice to answer all questions about cuts~$(S,T)$ in~$D$. 
Perfect~\cite{Perfect1968} showed, given any~$D=(V,A)$ and~$S\subseteq V$, that the subsets of~$T$ which are linked to~$S$ in~$D$ form a matroid~$(D,S)$, 
of a class now called \emph{gammoids} (see~\cite{Oxley-matroidtheory,Schrijver}).
Marx~\cite{Marx09-matroid} gave a randomized polynomial-time procedure for finding a 
representation of this matroid.  

\begin{theorem} [\cite{Perfect1968,Marx09-matroid}]\label{theorem:perfectmarx}
Let~$D=(V,A)$ be a directed graph, and let~$S\subseteq V$. The subsets~$T\subseteq V$ 
which are linked to~$S$ form the independent sets of a matroid over~$V$. Furthermore, a 
representation of this matroid can be obtained in randomized polynomial time with one-sided error.
\end{theorem}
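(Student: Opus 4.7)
The plan is to split the statement into two parts: (i) the structural claim that the subsets linked to $S$ form the independent sets of a matroid, due to Perfect, and (ii) the algorithmic claim that a representation can be obtained in randomized polynomial time, due to Marx. For (i) I would first reduce to the \emph{strict} case where one asks about linkedness of every subset of $V$; for restricted ground sets arising in the general statement the result then follows because restrictions of matroids are matroids. In the strict case the classical route is the Ingleton--Piff theorem, which states that strict gammoids are exactly the duals of transversal matroids. The reduction builds a bipartite graph on parts $V$ and $V\setminus S$ whose edges encode the arcs of $D$ together with the identity on $V\setminus S$, and a Menger-type argument shows that $T \subseteq V$ is linked to $S$ in $D$ iff $V\setminus T$ is a partial transversal in this bipartite graph. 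Since transversal matroids are matroids (by Hall's theorem), so are their duals, which yields (i).

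For (ii), I would use the standard generic representation of a transversal matroid: the matrix $A$ whose rows are indexed by $V\setminus S$, columns by $V$, and whose entry $A_{F,v}$ is a fresh indeterminate $x_{F,v}$ whenever $F$ and $v$ are adjacent in the above bipartite graph and $0$ otherwise, represents the transversal matroid over the field of rational functions; the relevant $r\times r$ minors of $A$ are nonzero polynomials exactly for the partial transversals of size $r$. To obtain a representation over a finite field I would substitute independent uniform random values from a finite field \F of size $2^{c|V|}$ for a suitable constant $c$. Each relevant minor has degree at most $|V|$ in the $x_{F,v}$, so by the Schwartz--Zippel lemma it survives the substitution with probability at least $1-|V|/|\F|$, and a union bound over the polynomially many bases yields a correct representation with probability exponentially close to $1$. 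A representation of the strict gammoid itself is then obtained by the standard transformation $[I \mid B] \mapsto [-B^\top \mid I]$ that turns a representation of a matroid into one of its dual over the same field, and the general gammoid on an arbitrary ground set is obtained by column deletion.

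The main obstacle is to argue that the error is genuinely one-sided, and that the representation stays polynomial-sized. A random evaluation can only destroy independence---turning a linearly independent set of columns into a dependent one if a relevant minor accidentally vanishes on the chosen point---but can never create independence, because a polynomial that is identically zero stays zero under every evaluation. Bit length is controlled because elements of \F have encoding size $O(|V|)$ and both dualization and column deletion preserve polynomial-size representations, so the entire construction runs in randomized polynomial time with one-sided error as claimed.
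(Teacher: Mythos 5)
The paper itself does not prove this statement: Theorem~\ref{theorem:perfectmarx} is imported as a black box, with the matroid property credited to Perfect and the randomized representation to Marx, and the paper only adds its own argument later (Corollary~\ref{cor:smallgammoid}) to shrink the entry lengths. Your reconstruction follows exactly the route behind those citations --- strict gammoids as duals of transversal matroids (Ingleton--Piff), the generic indeterminate matrix for the transversal matroid, random substitution with Schwartz--Zippel, then dualization and column deletion --- so in spirit it matches the cited sources (in particular Marx's proof) rather than anything proved in the paper. The overall architecture is sound, and your discussion of why the error is one-sided (evaluation can kill a nonzero minor but cannot revive an identically zero one) is the right point and agrees with the error direction the paper relies on.

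However, two steps are stated incorrectly. First, the combinatorial correspondence is not ``$T$ is linked to $S$ iff $V\setminus T$ is a partial transversal'': with your bipartite graph (parts $V$ and $V\setminus S$, arc edges plus identity edges), taking $T=V$ gives $V\setminus T=\emptyset$, which is vacuously a partial transversal even though $V$ is linked to $S$ only when $S=V$. The correct statement is that $T$ is linked to $S$ iff $V\setminus S$ can be matched into $V\setminus T$, i.e.\ iff $V\setminus T$ is \emph{spanning} in the transversal matroid; this is precisely why the gammoid is the \emph{dual} of the transversal matroid rather than a restriction of it. As written, your equivalence would make the family of linked sets closed under supersets, contradicting the downward closure you are trying to establish, so the Menger-type argument must be redone in the spanning form. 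Second, a transversal matroid can have exponentially many bases (the paper's own Corollary~\ref{cor:smallgammoid} bounds the number of maximal independent sets only by $\min(|T|^{|S|},2^{|T|})$), not polynomially many; your union bound survives only because you chose $|\mathbb{F}|=2^{c|V|}$, and with a genuinely polynomial-size field the stated error bound would fail. A smaller point: before applying the dualization $[I\mid B]\mapsto[-B^\top\mid I]$ you need the substituted matrix to retain full row rank $|V\setminus S|$, since otherwise the dual of the represented matroid can acquire spurious independent sets and the one-sidedness is lost; this holds within your high-probability event and can be checked deterministically, but it deserves an explicit sentence.
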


Here, one-sided error means that dependent (non-linked) sets are preserved, but independent
(linked) sets in the graph may not be, i.e., if the procedure returns a matrix~$A$, then there may
be some subsets of~$T$ which are linked to~$S$ but which are not independent in the matroid
represented by~$A$.  However, the risk of this can be made arbitrarily small.  

It remains for us only to bound the bit-length of the entries of the matrix (which would otherwise
be polynomial in~$|V|$).  This is easily done by standard methods. 

\begin{corollary} \label{cor:smallgammoid}
Let~$D=(V,A)$ be a directed graph,~$\epsilon>0$ a given real, and let~$S$ and~$T$ be possibly
overlapping subsets of~$V$. Let~$M$ be the gammoid formed by subsets of~$T$ linked to~$S$.
A representation of~$M$ as an~$|S|\times|T|$ matrix over the rationals with entries of bit-length 
$\BigOh(\min(|T|,|S|\log |T|) + \log (1/\epsilon) + \log |V|)$
can be computed in randomized polynomial time with one-sided error at most~$\epsilon$.
\end{corollary}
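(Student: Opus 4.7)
The plan is to adapt Marx's construction behind Theorem~\ref{theorem:perfectmarx} so that the random substitution is performed over a suitably chosen finite field $\mathbb{F}_p$, and then to interpret the resulting entries as non-negative rational integers. Recall that the gammoid is represented by assigning random weights to the arcs of $D$ and reading off the matrix whose columns correspond to $T$; a subset $T' \subseteq T$ is linked to $S$ exactly when a certain polynomial in the arc-weight variables is not identically zero, and that polynomial has degree $\BigOh(|V|)$. Consequently, by the Schwartz--Zippel lemma, a single subset that \emph{should} be independent becomes dependent after evaluation over $\mathbb{F}_p$ with probability at most $d/p$ for $d = \BigOh(|V|)$.

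To make the overall error at most $\epsilon$ I would union-bound over all candidate independent subsets, i.e., all subsets of $T$ of size at most the rank $r := \min(|S|,|T|)$. Trivially there are at most $2^{|T|}$ such sets, but also at most $\sum_{i \le r} \binom{|T|}{i} \le |T|^{|S|}$, so their total number is bounded by $\min(2^{|T|}, |T|^{|S|})$, whose logarithm is $\BigOh(\min(|T|, |S|\log|T|))$. Picking a prime $p \geq c \cdot |V| \cdot \min(2^{|T|},|T|^{|S|}) / \epsilon$ for a suitable absolute constant $c$ therefore makes the failure probability at most $\epsilon$, and a prime of this magnitude can be sampled in randomized polynomial time by standard means. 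Its bit-length is $\BigOh(\min(|T|,|S|\log|T|) + \log(1/\epsilon) + \log|V|)$, which is exactly the claimed bound.

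Two bookkeeping points then complete the proof. First, the returned matrix can be presented in $|S|\times|T|$ form: the gammoid has rank at most $|S|$, so we may restrict to any $|S|$ linearly independent rows of the representation (row-reducing if necessary) without changing the matroid, and the entries still lie in $\{0,1,\ldots,p-1\}$, so the bit-length bound is preserved when they are viewed as rationals. Second, the one-sided error is inherited automatically from Theorem~\ref{theorem:perfectmarx}: a subset of $T$ that is not linked to $S$ corresponds to a polynomial that vanishes \emph{identically}, and hence evaluates to zero for every choice of arc weights, so genuinely dependent sets remain dependent; only linked subsets may spuriously become dependent over $\mathbb{F}_p$.

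The step I would verify most carefully is the polynomial-degree bound of $\BigOh(|V|)$: it is precisely this that buys the $\log|V|$ (rather than $|V|$) contribution to the bit-length, and it requires inspecting Marx's transfer-matrix construction to confirm that the relevant minor polynomials really have degree linear in $|V|$ rather than blowing up combinatorially. Everything else is essentially Schwartz--Zippel and counting, together with the elementary observation that working over $\mathbb{F}_p$ and lifting entries to $\mathbb{Z}$ preserves both the one-sided error guarantee and the entry bit-length.
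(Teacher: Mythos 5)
Your proposal is correct in substance, but it takes a genuinely different route from the paper. The paper treats Theorem~\ref{theorem:perfectmarx} as a black box returning an~$|S|\times|V|$ integer matrix whose entries have bit-length polynomial in~$|V|$, and then \emph{compresses the entries after the fact} by reducing the whole matrix modulo a random prime~$p$: a nonzero basis determinant is destroyed only if~$p$ divides it, the number of prime divisors is bounded by the determinant's bit-length, and a union bound over the at most~$\min(|T|^{|S|},2^{|T|})$ maximal independent sets shows that a random prime of bit-length~$\BigOh(\min(|T|,|S|\log|T|)+\log(1/\epsilon)+\log|V|)$ suffices. You instead open up Marx's construction and perform the random substitution \emph{directly over}~$\mathbb{F}_p$, controlling the error by Schwartz--Zippel (degree~$\BigOh(|V|)$) plus a union bound over candidate subsets of~$T$. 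Both analyses hinge on the same count of relevant sets and yield the same bit-length; the trade-off is that the paper's argument needs no degree bound or internal details of the representation (it compresses any integer representation, at the price of needing the prime itself to be random and a two-stage error budget), whereas your argument needs no random prime (any prime of the right magnitude works) but must verify internals of the construction. Two such points deserve care, one of which you flag yourself: Marx's construction is not literally the ``path matrix with random arc weights'' you recall (that Lindstr\"om--Gessel--Viennot style matrix is only directly valid for acyclic~$D$, because of sign cancellations); it goes via representing a transversal matroid and dualizing. Your analysis still goes through on that route: for each linked~$T'\subseteq T$ there is a \emph{single} multilinear determinant of degree at most~$|V|$ in the original indeterminates (a full-rank minor avoiding the columns of~$T'$) whose non-vanishing after substitution certifies independence of~$T'$ in the dualized, restricted representation, so the per-set Schwartz--Zippel bound and the union bound over subsets of~$T$ only (rather than over all independent sets of the intermediate transversal matroid, which would be exponential in~$|V|$) are justified, and one-sidedness holds since for non-linked~$T'$ every such minor vanishes identically. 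With those details filled in, your proof is a valid, and in some respects cleaner, alternative to the paper's.
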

\begin{proof}
Theorem~\ref{theorem:perfectmarx} can be made to return an~$|S|\times |V|$ matrix over the rationals, with
arbitrarily small one-sided error~$\epsilon'>0$ and individual entries being integers of bit-length polynomial
in~$|V|$ and~$\log 1/\epsilon'$.  Let~$\epsilon'=\epsilon/2$, and let~$A$ be the matrix returned,
with columns not in~$T$ removed.
To reduce the length of the entries, we take all entries in~$A$ modulo a sufficiently large random prime~$p$.
We argue that the matrix~$A'$ produced this way satisfies all conditions.

Consider an independent column set in~$A$.  Since independence corresponds to a square submatrix
with non-zero determinant, we see that~$A$ and~$A'$ differ in this aspect only if~$p$ divides said determinant.
The number of distinct prime factors in a number is bounded by the bit-length, which for our
determinants is polynomially bounded in~$|V|+\log 1/\epsilon$.  Since the number of maximal
independent sets in~$M$ is bounded by both~$|T|^{|S|}$ and by~$2^{|T|}$, the total number of
distinct primes is bounded as~$t=\min(|T|^{|S|},2^{|T|})(|V|+\log 1/\epsilon)^{\BigOh(1)}$.
Thus, if we pick a random prime from a set of
size at least~$t'=(2/\epsilon)\cdot t$, the total risk of failure is bounded by~$\epsilon$.
By the Prime Number Theorem, primes of bit length~$\BigOh(\log (t' \log t'))=\BigOh(\log t+\log 1/\epsilon)$ are sufficient for this, 
which matches the statement of the corollary. 
By the AKS primality testing algorithm~\cite{AgrawalKS02-primes}, finding a random prime can be
done with high probability by repeated uniform sampling, and if this fails we may pick an arbitrary fixed prime.
It can be seen that errors throughout are one-sided.
\end{proof}

The following proposition extends the available gammoid structure to allow any subset~$S$ of the terminals as sources (i.e., without fixing it in advance), and to support also deletion of terminals. This will be our interface for using Corollary~\ref{cor:smallgammoid} in the \OCT kernelization. The argument is straightforward and works as well for a given directed graph. 

\begin{proposition}\label{proposition:interface}
Let~$G=(V,E)$ be an undirected graph, let~$X\subseteq V$ be a set of terminals, and let~$X':=\{v'\mid v\in X\}$ be a set of new vertices. There is a polynomial-time construction of a directed graph~$D=(V\cup X',A)$ such that~$I\subseteq X\cup X'$ is an independent set of the gammoid~$(D,X')$ if and only if~$T$ is linked to~$S$
in~$G-R$ where
\begin{itemize}
\item $S$ contains all vertices~$v\in X$ with~$v,v'\notin I$,
\item $T$ contains all vertices~$v\in X$ with~$v,v'\in I$, and
\item $R$ contains all vertices~$v\in X$ with~$v\in I$ but~$v'\notin I$.
\end{itemize}
\end{proposition}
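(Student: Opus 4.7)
The plan is to construct $D$ as follows: replace every edge $\{u,v\} \in E$ with the two arcs $u \to v$ and $v \to u$, and for each terminal $v \in X$ add the fresh vertex $v' \in X'$ together with the single arc $v' \to v$. This is plainly polynomial-time. Given any $I \subseteq X \cup X'$, partition $X$ into the three classes $S, T, R$ from the statement together with a fourth class $Q = \{v \in X : v \notin I,\ v' \in I\}$. Writing $U' = \{v' : v \in U\}$ for $U \subseteq X$, one checks that $I \cap X = T \cup R$, $I \cap X' = T' \cup Q'$, and $X' \setminus I = S' \cup R'$; by definition $I$ is independent in the gammoid $(D, X')$ exactly when there are $|I|$ vertex-disjoint paths from $X'$ to $I$ in $D$.

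For the forward direction, assume $|T|$ vertex-disjoint paths $s_i \leadsto t_i$ in $G - R$ with $s_i \in S$ and $t_i \in T$, and assemble three families of paths in $D$: a trivial path at each $v' \in T' \cup Q'$; the ``dummy'' two-vertex path $v' \to v$ for each $v \in R$; and the lift $s_i' \to s_i \to \cdots \to t_i$ of each input path. In $X'$ these families use $T' \cup Q'$, $R'$, and a subset of $S'$ respectively, which are pairwise disjoint because $S, T, R, Q$ partition $X$. In $V$ the dummy paths sit exactly on $R$, while the lifted paths live in $G - R$ and are mutually disjoint by hypothesis, so they miss both $R$ and each other. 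The total count is $(|T|+|Q|) + |R| + |T| = |I|$, proving $I$ is independent.

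For the backward direction, suppose $I$ is independent, providing $|I|$ vertex-disjoint paths from $X'$ to $I$ in $D$. By disjointness each element of $I$ is the endpoint of exactly one path, so the vertices in $I \cap X' = T' \cup Q'$ absorb $|T| + |Q|$ trivial paths, leaving $|T| + |R|$ non-trivial paths from $S' \cup R'$ into $T \cup R$. The key observation is that any non-trivial path starting at $r' \in R'$ must begin with the unique arc $r' \to r$, so it visits $r \in R \subseteq I$; but $r$'s own endpoint-path must then be this very path, forcing it to terminate at $r$ and hence coincide with the two-vertex $r' \to r$. Consequently no path of type $R' \to T$ exists, so the $|T|$ paths ending in $T$ all start in $S'$, and the same endpoint-uniqueness argument shows they cannot pass through any $r \in R$ (which is already claimed as the endpoint of some other path). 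Stripping the leading $s'$ from each yields $|T|$ vertex-disjoint $S$-to-$T$ paths inside $G - R$, so $T$ is linked to $S$ in $G - R$. The main obstacle throughout is precisely this endpoint-uniqueness step in the backward direction: it rules out ``cheating'' paths $r' \to r \to \cdots \to t$ that would otherwise abuse the artificial arc $r' \to r$ to sneak across a removed vertex on the way to $T$.
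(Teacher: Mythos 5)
Your construction of $D$ is identical to the paper's (bidirect each edge, add the arc $v'\to v$ for each terminal), and both directions are argued just as in the paper: one direction by assembling the same packing of trivial paths at $I\cap X'$, the two-vertex paths $r'\to r$ for $r\in R$, and lifted $S'$-to-$T$ paths, the other by the same vertex-disjointness/endpoint-uniqueness argument forcing paths ending in $T$ to start in $S'$ and avoid $R$. The proposal is correct and essentially the same proof as the paper's.
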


\begin{proof}
The arc set~$A$ of the digraph~$D$ is defined as follows: For any two adjacent vertices~$u,v\in V$ add~$(u,v)$ and~$(v,u)$ to~$A$. Then add an arc~$(v',v)$ for all~$v\in X$ (and corresponding~$v'\in X'$).

We consider first any independent set~$I\subseteq X\cup X'$ 
of the gammoid~$(D,X')$. There are~$|I|$ vertex-disjoint directed paths from~$X'$ to~$I$ in~$D$; fix any such packing~$\PP$ of directed paths. By the structure of~$D$ all paths of~$\PP$ are either 
of form~$(u')$ with~$u'\in X'$, or of form~$(u',u,\ldots,v)$ (possibly with~$u=v$) and containing no vertices of~$X'\setminus\{u'\}$.
Let~$P=(u',u,\ldots,v)\in\PP$ with~$u'\in X'$ and~$v\in T$, i.e., with~$v,v'\in I$. Clearly,~$u'\neq v'$ since~$(v')\in\PP$ is the unique directed path from~$X'$ to~$v'$ and must be contained in~$\PP$ as~$v'\in I$ (and using vertex-disjointness). Since~$u$ and~$u'$ are on~$P$, no other path of~$\PP$ can end in~$u$ or~$u'$, thus~$u,u'\notin I$ and hence~$u\in S$. Finally, no vertex~$p\in R$ can be on~$\PP$ since, by~$p\in I$, that requires another path of~$\PP$ to end in~$p$.
Now, the subpath~$(u,\ldots,v)$ contained in~$D-X'$
corresponds to an undirected path from~$S$ to~$T$ in~$G-R$, and all those paths are vertex-disjoint.

Now, let~$\PP$ be a set of~$|T|$ 
vertex-disjoint paths from~$S$ to~$T$ in~$G-R$. We construct a set~$\PP'$ of directed vertex-disjoints paths from~$X'$ to~$I$ in the digraph~$D$ where~$I$ is obtained according to the statement of the proposition. For each path~$(u,\ldots,v)\in\PP$ we add the path~$(u',u,\ldots,v)$ to~$\PP'$. Clearly, those paths exist in~$D$ and they are vertex-disjoint. We also require paths ending in the vertices of~$I\setminus T$. This includes vertices~$r\in R$ and vertices~$v'$ with~$v\notin S\cup R$. It is easy to see that adding paths~$(r',r)$ and~$(v')$, respectively, for all those vertices yields the required path packing~$\PP'$ (key fact: in the initial~$|T|$ paths we only used~$v'$ when~$v\in S$, and vertices~$r\in R$ were unused). Thus~$I$ is an independent set of the gammoid~$(D,X')$. 
\end{proof}

Since it appears a very useful form, e.g., for obtaining polynomial kernels for other cut problems, we explicitly state the combination of Proposition~\ref{proposition:interface} and Corollary~\ref{cor:smallgammoid} as a corollary.

\begin{corollary}\label{corollary:smallrepresentedmatroid}
Let~$G=(V,E)$ be an undirected graph,~$X\subseteq V$ a set of
terminals, and~$\epsilon$ a positive real. There is a randomized
polynomial-time algorithm computing an~$|X|\times 2|X|$ matrix with integer entries of bit length~$\BigOh(
|X|+\log |V| + \log 1/\epsilon)$, such that with probability at least~$(1-\epsilon)$
any set of columns~$I\subseteq X\cup X'$ is independent in~$M$ if~$T$ is linked to~$S$ in~$G-R$, where~$S,T,R\subseteq X$ are defined as in Proposition~\ref{proposition:interface}. The error is one-sided: the number of disjoint paths as indicated by independence is a lower bound on the true value in~$G$.
\end{corollary}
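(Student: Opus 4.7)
The plan is to chain the two previously established results in a direct way. First, I would invoke Proposition~\ref{proposition:interface} on the input graph~$G$ and terminal set~$X$ to obtain the auxiliary directed graph~$D=(V\cup X',A)$, together with the identification of independent sets of the gammoid~$(D,X')$ with linkedness patterns in~$G-R$. Concretely, for any~$I\subseteq X\cup X'$, independence in~$(D,X')$ is equivalent to the sets~$S,T,R$ induced by~$I$ (as defined in the proposition) satisfying that~$T$ is linked to~$S$ in~$G-R$.

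Second, I would apply Corollary~\ref{cor:smallgammoid} to~$D$ with source set~$X'$ and target set~$X\cup X'$, using the given error parameter~$\epsilon$. This yields, in randomized polynomial time, a rational matrix of dimension~$|X'|\times|X\cup X'|=|X|\times 2|X|$ that represents the restriction of the gammoid to the ground set~$X\cup X'$ with one-sided error at most~$\epsilon$. The bit-length guarantee from Corollary~\ref{cor:smallgammoid} is~$\BigOh(\min(|T|,|S|\log|T|)+\log(1/\epsilon)+\log|V(D)|)$; substituting~$|S|=|X|$,~$|T|=2|X|$, and~$|V(D)|=|V|+|X|\leq 2|V|$, the minimum resolves to~$2|X|$, giving the claimed bound~$\BigOh(|X|+\log|V|+\log 1/\epsilon)$.

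Third, correctness of the combined procedure is immediate from the two ingredients: Proposition~\ref{proposition:interface} provides an exact equivalence between independence in the gammoid and linkedness in~$G-R$, while Corollary~\ref{cor:smallgammoid} ensures that independence in the returned matrix implies independence in the gammoid (dependent/non-linked sets are preserved). Hence, with probability at least~$1-\epsilon$, independence of a column set~$I\subseteq X\cup X'$ in~$M$ witnesses that~$T$ is linked to~$S$ in~$G-R$, and in particular the number of disjoint paths certified by independence is a lower bound on the true maximum flow.

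There is no substantive obstacle beyond bookkeeping: the only nontrivial check is that the~$\min(|T|,|S|\log|T|)$ term of Corollary~\ref{cor:smallgammoid} collapses to~$\BigOh(|X|)$ in our parameter regime, and that the~$\log|V(D)|$ term is~$\BigOh(\log|V|)$ after adding the~$|X|$ new vertices; both are straightforward.
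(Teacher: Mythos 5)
Your proposal is correct and is essentially the paper's own argument: the corollary is stated there precisely as the combination of Proposition~\ref{proposition:interface} (build~$D$ and identify gammoid independence with linkedness in~$G-R$) with Corollary~\ref{cor:smallgammoid} applied to source set~$X'$ and ground set~$X\cup X'$, with exactly your parameter substitutions for the size and bit-length bounds. The only slight imprecision is in your last paragraph's phrasing of the error direction: matrix-independence implies linkedness with certainty (that is the one-sided guarantee and the lower-bound claim), while the probability~$1-\epsilon$ covers the converse, namely that linked triples~$(S,T,R)$ are indeed reported as independent.
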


\section{A randomized polynomial kernel for odd cycle transversal}\label{section:oct}

In this section, using the results presented in the previous section, 
we will give our randomized polynomial kernelization for \OCT. 
We will start by a presentation of the FPT algorithm of Reed, Smith, and 
Vetta~\cite{ReedSV04}, as understanding this algorithm is critical to
understanding the kernelization.  This is presented in
Section~\ref{section:rsvalg}.  Then we present the kernelization in
Section~\ref{section:actualkernel}, and finally discuss the relation with
lower bounds in Section~\ref{section:kernellower}.

\subsection{The Reed-Smith-Vetta algorithm} \label{section:rsvalg}

The FPT algorithm of Reed, Smith, and Vetta~\cite{ReedSV04} solves \OCT by a recursive approach: Solve the problem for~$G-v$, where~$v$ is an arbitrary vertex. If it returns a solution~$X_v$ of size at most~$k$, then~$X:=X_v\cup\{v\}$ is a solution of size at most~$k+1$ for~$G$, and the following \emph{compression} version of the problem is solved. Otherwise~$(G-v,k)$ is \no, thus~$(G,k)$ must be \no.
\begin{quote}
{\bf Input:} A graph~$G=(V,E)$, an integer~$k$, and a bipartization set~$X$ of size~$k+1$. 
\\{\bf Parameter:} $k$.  
\\{\bf Question:} Is there a bipartization set~$Y$ for~$G$ such that~$|Y|\leq k$?
\end{quote}
The compression routine in the Reed-Smith-Vetta algorithm consists of
trying exhaustively all ways of how the set~$X$ could interact with a smaller solution~$Y$, each coming down to a maximum flow
computation. Concretely, we create a graph~$G'=(V',E')$ from~$G$ and~$X$ in the following way:
let~$S_1 \cup S_2$ be a bipartition of~$G-X$.  Let~$V'=V-X \cup \{x_1,x_2: x \in X\}$, where~$x_1$ and~$x_2$ are new vertices.
Connect~$x_1$ to all neighbors of~$x$ in~$S_2$ and~$x_2$ to all neighbors of~$x$ in~$S_1$. By subdividing edges,
we may assume that there are no edges inside~$X$. Note that~$G'$ is bipartite with partitions~$S_1 \cup \{x_1: x\in X\}$ 
and~$S_2 \cup \{x_2: x \in X\}$. For~$U \subseteq X$, let~$\X{U}:=\{x_1,x_2: x \in U\}$, and let~$X'=\X{X}$.

The algorithm searches for cuts through~$X'$ in~$G'$.  For a subset~$U \subseteq X$, let a pair~$(S,T)$ of disjoint subsets of~$X'$ be a \emph{valid split} 
of~$U$ if for every~$x \in U$ we have~$|\{x_1,x_2\} \cap S| = |\{x_1,x_2\} \cap T|=1$ and 
for every~$x \in (X \setminus U)$ we have~$|\{x_1,x_2\} \cap S| = |\{x_1,x_2\} \cap T|=0$.
The following lemma is a direct consequence of~\cite{ReedSV04}.

\begin{lemma}[\cite{ReedSV04}] \label{lemma:rsv}
Let~$G=(V,E)$ be a graph and let~$X\subseteq V$ such that~$G-X$ is bipartite. Let~$G'$ be constructed from~$G$ and~$X$ as above. Let~$\delta(H,S,T)$ denote the minimum size of an~$(S,T)$ vertex cut in~$H$. The minimum size of~$Y\subseteq V$ such that~$G-Y$ is bipartite equals the minimum of~$|X \setminus U| + \delta(G'-\X{X \setminus U},S,T)$ over all subsets~$U$ of~$X$ and all valid splits~$(S,T)$ of~$U$.
\end{lemma}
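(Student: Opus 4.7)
The plan is to prove the two inequalities separately, each time exploiting the correspondence between 2-colorings of $G-Y$ and $(S,T)$-cuts in the bipartite auxiliary graph~$G'$.

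For the direction $\mathrm{OCT}(G)\le\min_{U,(S,T)}\bigl(|X\setminus U|+\delta(G'-\X{X\setminus U},S,T)\bigr)$, I would fix any subset $U\subseteq X$, any valid split $(S,T)$ of~$U$, and a minimum $(S,T)$-vertex cut~$C$ in $G'-\X{X\setminus U}$. From these I construct $Y:=(X\setminus U)\cup(C\cap(V\setminus X))\cup\{x\in U:\{x_1,x_2\}\cap C\ne\emptyset\}$ and argue that $|Y|\le|X\setminus U|+|C|$ (the new $X$-vertices are paid for by their deleted copies in~$C$) and that $G-Y$ is bipartite. For the latter I construct an explicit 2-coloring of $G-Y$: every $x\in U\setminus Y$ is colored by the valid split (say, color~1 if $x_1\in S$, color~2 if $x_2\in S$), every surviving vertex of $G-X$ is colored by its side in the bipartition $S_1,S_2$, and I check that no edge is monochromatic. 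A monochromatic edge would lift, via the doubling construction, to an $S$-$T$ walk in $G'-\X{X\setminus U}$ avoiding $C$, contradicting that $C$ is a cut.

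For the reverse direction, fix an optimal OCT solution $Y^*$, set $U:=X\setminus Y^*$, and pick any 2-coloring~$c$ of $G-Y^*$. Use~$c$ to define a valid split: for each $x\in U$ place $x_1$ in~$S$ and $x_2$ in~$T$ when $c$ agrees with the~$S_1$-side at~$x$, and swap otherwise. I then claim that $Y^*\cap(V\setminus X)$ is an $(S,T)$-vertex cut in $G'-\X{X\setminus U}$: any surviving $S$-$T$ path would, when mapped back to~$G$ by identifying $x_1,x_2$ with~$x$, produce a closed walk whose length parity (dictated by the bipartite structure of~$G'$ and the way $(S,T)$ splits the copies) forces a monochromatic edge or an odd cycle in $G-Y^*$, contradicting bipartiteness. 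Combining the two bounds gives $|Y^*|\ge|X\setminus U|+\delta(G'-\X{X\setminus U},S,T)$, hence equality.

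The main obstacle I expect is keeping the parity bookkeeping transparent. The valid split is not uniquely determined by~$Y^*$, because each connected component of $G-X-(Y^*\setminus X)$ can be recolored independently; so I need to argue that for \emph{some} choice of coloring the induced split is witnessed by the cut, and conversely that any valid split together with its minimum cut yields a consistent coloring. The bipartiteness of~$G'$ is what makes this work: an $S$-$T$ path through $V\setminus X$ has length whose parity matches the sides of its endpoints in the bipartition $(A,B)=(S_1\cup\{x_1\},S_2\cup\{x_2\})$, and the valid-split condition pairs a side-$A$ source with a side-$B$ sink precisely when the two corresponding vertices of~$X$ are meant to have the same color in~$G-Y$. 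A minor piece of bookkeeping is to handle cuts that use $S\cup T$ vertices themselves, by charging each such copy $x_i\in C$ to adding the original $x$ to $Y$ on the OCT side, which keeps the sizes aligned.
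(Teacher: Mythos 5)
Your second direction is essentially sound: fixing an optimal $Y^*$, setting $U=X\setminus Y^*$, deriving the split from an arbitrary proper 2-coloring $c$ of $G-Y^*$, and using the bipartiteness of $G'$ to read off the parity of a surviving $S$--$T$ path (after truncating it so that it has no internal terminal copies) does show that $Y^*\setminus X$ is an $(S,T)$-cut in $G'-\X{X\setminus U}$; your worry about the freedom to recolor components is not actually an issue there, since any fixed $c$ works. The genuine gap is in the first direction. The explicit 2-coloring you propose -- split-colors on $U\setminus Y$ together with the \emph{original} bipartition $S_1,S_2$ on the surviving vertices of $G-X$ -- is not proper in general, and a monochromatic edge under it does not lift to an $S$--$T$ walk avoiding $C$, so the claimed contradiction is unavailable. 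Concretely, let $G$ be a triangle on $x,a,b$ with $X=\{x\}$, $S_1=\{a\}$, $S_2=\{b\}$, $U=\{x\}$; then $G'$ is the path $x_1,b,a,x_2$, and for the valid split $S=\{x_1\}$, $T=\{x_2\}$ the set $C=\{b\}$ is a minimum cut, giving your $Y=\{b\}$. Here $G-Y$ is bipartite, but your coloring gives both $x$ and $a$ color $1$; the offending edge $xa$ lifts only to the edge $x_2a$ in $G'$, and there is no $S$--$T$ walk avoiding $C$, so nothing contradicts the cut property. What is missing is precisely the crux of the Reed--Smith--Vetta compression step: after removing $C$, keep the original color on the surviving vertices of $G-X$ that are reachable from $S$ in $G'-\X{X\setminus U}-C$, and \emph{flip} the color of those that are not. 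With this recoloring, edges inside $G-X-Y$ remain proper because $S$-reachability is constant along them, and a monochromatic edge incident to some $x\in U\setminus Y$ forces either a non-$S$-reachable vertex adjacent to an $S$-copy (contradicting non-reachability) or an $S$-reachable vertex adjacent to a $T$-copy (an $S$--$T$ path avoiding $C$); only then does your intended argument go through. Alternatively, one can avoid constructing a coloring and argue that an odd cycle in $G-Y$ decomposes into segments between copies of $U$-vertices, each of which must join two $S$-copies or two $T$-copies since $C$ is a cut, and the resulting parity count around the cycle is even.

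For context, the paper itself does not reprove this lemma: it cites Reed, Smith, and Vetta and only sketches an alternative view via a 2-SAT formulation with equality/disequality constraints, in which negating variables so that $F-X$ is all-zero satisfiable plays exactly the role of the color-flip you are missing. So your direct graph-theoretic route is a legitimate alternative, but it is incomplete without the reachability-based recoloring (or the parity argument) in the first direction.
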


Clearly, given this result, one  
can find an optimal bipartization set for~$G$ by looping over the~$3^{|X|}$ options for~$U$,~$S$, and~$T$.
In particular, one is not limited to using~$X$ of size~$k+1$ but, sacrificing runtime, a single run of the iterative compression routine on~$G$ and an approximate solution~$X$ for~$G$ suffices. In the next subsection, this setting will be used to give a polynomial kernelization of \OCT.

For proofs of Lemma~\ref{lemma:rsv}, see~\cite{ReedSV04} or one of the presentations subsequently given by other authors~\cite{Huffner09,LokshtanovSS09}.
For variation, we will now sketch an alternative approach to showing the result.

Instead of a graph, we view the OCT instance as a 2-SAT formula~$F$ containing only constraints~$(x=y)$ and~$(x\neq y)$.
Clearly, for any graph~$G$, if we replace every edge~$\{u,v\}$ in~$G$ by a constraint~$(u \neq v)$, then we get a formula~$F$
over~$V$ which is satisfiable if and only if~$G$ is bipartite, and this holds for every induced subgraph of~$G$ as well.
Thus, the problem reduces to deleting~$k$ variables~$Z$ of~$F$ and their incident constraints such that the remaining formula~$F-Z$ is satisfiable.

Now, observe that we can \emph{negate} a variable~$v$ in~$F$ by changing~$(u=v)$-constraints to~$(u \neq v)$-constraints 
and vice versa.  This does not affect the satisfiability of~$F-Y$ for any variable set~$Y$.
Thus, negate variables in~$F$ so that~$F-X$ is satisfied by the all-zero assignment. We now observe that the only remaining disequality
constraints~$(x \neq y)$ of~$F$ are incident to~$X$.  By deleting or assigning every~$x \in X$, we create smaller formulas~$F'$
containing only equality constraints~$(u=v)$ and assignments~$(u=0)$ or~$(u=1)$.
This trivially reduces to a vertex cut problem in a graph, which would conclude the proof of the FPT result.

To get from here to the construction of~$G'$ above, consider the effects of splitting variables~$x \in X$ in~$F$
into distinct variables~$x$ and~$\neg x$ representing its \emph{literals}, and replace constraints~$(x \neq y)$
by~$(\neg x = y)$. Clearly, in any assignment we must require~$(x \neq \neg x)$. 
Applying all this to the bipartition~$S_1 \cup S_2$ of~$G-X$ will show the equivalence of the result.

\subsection{Kernelization}\label{section:actualkernel}

Now, we give the kernelization.  
We begin by describing a compression procedure for OCT, by applying the matroid tools of Section~\ref{section:newmatroidsection}
to Lemma~\ref{lemma:rsv} above.  The result is a randomized polynomial-time compression procedure with one-sided 
error, consisting of the following steps.  Let an instance~$(G,k)$ of OCT and an error parameter~$\epsilon>0$ be given.
\begin{enumerate}
\item If~$k\leq \log n$, run the Reed-Smith-Vetta algorithm in time~$\BigOh(3^kkmn)=n^{\BigOh(1)}$ (polynomial in~$n$) and return a constant-size \yes- or \no-instance accordingly. \label{kernel:step:rsv}
\item Otherwise, if~$k>\log n$, let~$X$ be an approximate solution of ratio~$\BigOh(\sqrt{\log n})=\BigOh(k^{1/2})$, provided by an algorithm due to Agarwal et al.~\cite{AgarwalCMM05-approx}.
      Unless~$|X|=\BigOh(k^{3/2})$, answer \no as there cannot be a solution of size at most~$k$. (If~$|X|\leq k$ then answer \yes.)\label{kernel:step:approx}
\item Create the auxiliary graph~$G'$ from~$G$ and~$X$, as
in Section~\ref{section:rsvalg}.  
      Let~$X'=\{x_1,x_2\mid x\in X\}$.\label{kernel:step:auxiliarygraph}
\item Apply Corollary~\ref{corollary:smallrepresentedmatroid} to~$G'$ with terminal set~$X'$ and error parameter~$\epsilon/2$,
      creating a matrix~$A$. 
\item Output~$(A,k)$ as a polynomial-sized compression.\label{kernel:step:output}
\end{enumerate}
The total coding size of the matrix~$A$ is cubic in~$|X|$ (up to factors
logarithmic in~$1/\epsilon$). By the size guarantee in Step~\ref{kernel:step:approx}, we
get an~$\BigOh(k^{4.5})$-sized compression for OCT.
Note that~$(A,k)$ may be interpreted as an instance of an (artificial)
decision problem (implicitly defined in the proof of Lemma~\ref{lemma:compresscompress}).  Clearly, this problem is in \NP, allowing
us to reduce back to OCT to complete the kernelization (using the implicit Karp reduction as discussed in Section~\ref{section:prelim:kernels}).

For \EBip, we may replace Steps~\ref{kernel:step:rsv} and~\ref{kernel:step:approx} by the~$\BigOh(\log \textrm{OPT})$-approximation of 
Avidor and Langberg~\cite{AvidorL07-approx-klogk}, followed by an easy
reduction from \EBip to OCT, for a compression of size~$\tilde \BigOh(k^3)$.
It is an interesting question whether a polylog(OPT)-approximation is possible for OCT, 
as this would give us a~$\tilde \BigOh(k^3)$-sized compression for OCT.

Now we give our central compression result.

\begin{lemma} \label{lemma:compresscompress}
Let~$(G,k)$ be an instance of OCT,~$X$ a bipartization set for~$G$, and~$\epsilon>0$ be given.
Then there is a  
randomized compression of~$(G,k)$ to 
size~$\BigOh(|X|^2(|X|+\log 1/\epsilon))$
with one-sided error, producing no false negatives.
The error probability is bounded by~$\epsilon$,
and the running time is polynomial in~$|G|$ and~$\log 1/\epsilon$. 
\end{lemma}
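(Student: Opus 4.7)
My plan is to combine Lemma~\ref{lemma:rsv} (the iterative-compression characterization of OCT) with the polynomial-size gammoid encoding of Corollary~\ref{corollary:smallrepresentedmatroid}, so that a single matrix of size polynomial in~$|X|$ can answer every vertex-cut query that the Reed-Smith-Vetta compression step would otherwise need.

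First I build the auxiliary graph~$G'$ from~$G$ and~$X$ exactly as in Section~\ref{section:rsvalg}, obtaining the terminal set~$X' = \{x_1, x_2 : x \in X\}$ of size~$2|X|$. I then invoke Corollary~\ref{corollary:smallrepresentedmatroid} on~$G'$ with terminals~$X'$ and error parameter~$\epsilon$, producing in randomized polynomial time a matrix~$A$ of dimensions~$\BigOh(|X|) \times \BigOh(|X|)$ whose entries have bit-length~$\BigOh(|X| + \log|V| + \log 1/\epsilon)$, for total bit-size~$\BigOh(|X|^2(|X| + \log 1/\epsilon))$ (the~$\log|V|$ factor is absorbed in the parameter regime of interest, where~$\log|V| \leq |X|$). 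The compressed instance is~$(A, k)$, interpreted as the decision problem~$\Q'$: do there exist~$U \subseteq X$ and a valid split~$(S,T)$ of~$U$, with~$R := X'(X\setminus U)$, such that~$|X\setminus U| + r(A; U, S, T) \leq k$? Here~$r(A; U, S, T)$ is a rank computation on the matroid~$M$ represented by~$A$, set up via Proposition~\ref{proposition:interface} so that it counts the maximum number of vertex-disjoint paths from~$S$ to~$T$ in~$G'-R$: from~$(U, S, T)$ one reads off the subset~$I_{\mathrm{fix}} \subseteq X' \cup X''$ of Proposition~\ref{proposition:interface} that pins down~$R$ as deleted and~$S$ as unused sources, and then defines~$r := \mathrm{rank}_M(I_{\mathrm{fix}} \cup \{v : v \in T\}) - \mathrm{rank}_M(I_{\mathrm{fix}})$. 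Since matroid rank is polynomial-time computable from~$A$, the language~$\Q'$ lies in~$\P \subseteq \NP$, so~$(A, k)$ is a valid compressed instance.

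Correctness rests on the one-sided guarantee of Corollary~\ref{corollary:smallrepresentedmatroid}: matroid independence always implies true linking in~$G'$, so~$r(A; U, S, T)$ is always a lower bound on the true vertex-cut size~$\delta(G'-R, S, T)$. Combined with Lemma~\ref{lemma:rsv}, this means the compressed-instance optimum is at most~$\mathrm{OPT}(G)$; hence every YES-instance of OCT deterministically yields~$(A, k) \in \Q'$, ruling out false negatives. Conversely, with probability at least~$1-\epsilon$ the matrix~$A$ actually represents the true gammoid (the failure bound of Corollary~\ref{corollary:smallrepresentedmatroid} is a union bound over \emph{all} column subsets, so the~$\epsilon$ is uniform across the~$\BigOh(3^{|X|})$ queries simultaneously). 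On this event~$r = \delta$ for every~$(U, S, T)$, the compressed-instance optimum equals~$\mathrm{OPT}(G)$, and every NO-instance is correctly rejected, giving one-sided error at most~$\epsilon$.

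The main work is organizational rather than deep: Proposition~\ref{proposition:interface} is already the right tool for packaging all choices of~$(S, T, R)$ into one gammoid, and Corollary~\ref{corollary:smallrepresentedmatroid} already provides the query-uniform error bound, so the remaining task is to verify that the~$I_{\mathrm{fix}}$ construction above exactly mirrors the~$(S, T, R)$-splitting from Lemma~\ref{lemma:rsv}. This follows from the case analysis in the proof of Proposition~\ref{proposition:interface} applied to the four possibilities~$\{v, v''\} \cap I$ for~$v \in X'$.
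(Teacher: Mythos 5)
Your construction is essentially the paper's: build the auxiliary graph~$G'$ with terminal set~$X'$, invoke Corollary~\ref{corollary:smallrepresentedmatroid}, output~$(A,k)$, and recover the Reed--Smith--Vetta quantities of Lemma~\ref{lemma:rsv} from independence queries via Proposition~\ref{proposition:interface}. However, there is a genuine gap in the step that is supposed to secure the probability-$1$ ``no false negatives'' guarantee. You extract the cut value as a rank difference~$r=\mathrm{rank}_M(I_{\mathrm{fix}}\cup T)-\mathrm{rank}_M(I_{\mathrm{fix}})$ and assert that~$r$ is \emph{always} a lower bound on~$\delta(G'-R,S,T)$ because independence in the represented matroid implies true linkage. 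That implication only gives~$\mathrm{rank}_M(Y)\le\mathrm{rank}_{\mathrm{true}}(Y)$ for every column set~$Y$; it does not bound the \emph{difference}. If the faulty representation renders the (truly independent) set~$I_{\mathrm{fix}}$ dependent, then~$\mathrm{rank}_M(I_{\mathrm{fix}})<|I_{\mathrm{fix}}|$ while~$\mathrm{rank}_M(I_{\mathrm{fix}}\cup T)$ may still reach~$\mathrm{rank}_{\mathrm{true}}(I_{\mathrm{fix}}\cup T)=|I_{\mathrm{fix}}|+\delta$ through bases avoiding part of~$I_{\mathrm{fix}}$ (this is consistent with the one-sided guarantee, which only says that every $M$-independent set is truly linked). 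Then~$r>\delta$, the compressed optimum can exceed~$\mathrm{OPT}(G)$, and a \yes-instance can be mapped to a \no answer --- exactly the event the lemma forbids with probability~$1$; an overall error probability of~$\epsilon$ is not a substitute here. The fix is small and is what the paper does: define the estimate as the maximum~$|T'|$ over~$T'\subseteq T$ such that the column set corresponding to~$(S,T',R)$ is independent in~$A$ (equivalently, subtract the \emph{known} value~$|I_{\mathrm{fix}}|$ rather than the computed rank). Every such independent set certifies~$|T'|$ vertex-disjoint $S$--$T$ paths in~$G'-R$, so the estimate never exceeds~$\delta$, and it equals~$\delta$ whenever the representation is correct.

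Two smaller points. Your claim that the target language~$\Q'$ is in~\P does not follow: matroid rank is polynomial-time computable, but the existential quantification over the exponentially many triples~$(U,S,T)$ remains, so what you get (and all that is needed for the back-reduction) is membership in~\NP, by guessing~$(U,S,T)$ and evaluating the estimate by rank computations. Also, absorbing the~$\log|V|$ term of Corollary~\ref{corollary:smallrepresentedmatroid} into~$|X|$ requires handling the case~$|X|\le\log|V|$ separately (there one can solve~$(G,k)$ outright in polynomial time using the set~$X$ and output a trivial instance); you gesture at this, and it matches how the paper's kernelization disposes of that regime.
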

\begin{proof}
The algorithm proceeds as Steps~\ref{kernel:step:auxiliarygraph}-\ref{kernel:step:output} of the kernelization algorithm,
creating first an auxiliary graph~$G'$ with a terminal set~$X'$ of size~$2|X|$, 
and then invoking Corollary~\ref{corollary:smallrepresentedmatroid} on~$G'$,~$X'$, and~$\epsilon/2$.
Let the resulting matrix be~$A$; our compression output is then~$(A,k)$.
The running time and output size are given by Corollary~\ref{corollary:smallrepresentedmatroid};
we only have to argue that~$(A,k)$ contains all the information needed
to decide the status of the OCT instance~$(G,k)$. 
By Step~\ref{kernel:step:approx}, we assume~$|X|>k$.

Recall the definition~$\X{U}=\{x_1,x_2: x \in U\}$ for~$U \subseteq X$.
By Lemma~\ref{lemma:rsv}, we need for all~$U\subseteq X$ the minimum~$(S,T)$ vertex cut size in~$G'-\X{X\setminus U}$, where~$S$ and~$T$ range over all valid splits of~$U$.
Clearly, if the minimum~$(S,T)$ vertex cut size in~$G'-R$ is~$\lambda$
then there is a set~$T'\subseteq T$ of size~$\lambda$ such that~$T'$ is linked to~$S$ in~$G'-R$, 
i.e., such that there are~$\lambda=|T'|$ vertex disjoint paths from~$S$ to~$T'$ in~$G'-R$ (using cut-flow duality).
This can be obtained from the matrix~$A$ by testing independence of all sets~$I$ which correspond 
(as in Corollary~\ref{corollary:smallrepresentedmatroid}) to choices~$S,T',R$ with~$T'\subseteq T$ 
and~$R=\X{X \setminus U}$. Note that whether an~$(S,T)$-cut may delete~$S$ and~$T$ makes no difference for the algorithm using Lemma~\ref{lemma:rsv}.

The behavior and one-sidedness of the error also follows from Corollary~\ref{corollary:smallrepresentedmatroid}.
\end{proof}

With the previously described algorithm and the approximation results
we get the following.

\begin{theorem}
There is a randomized~$\BigOh(k^{4.5})$-compression for \OCT and a 
randomized~$\tilde \BigOh(k^3)$-compression for \EBip, with one-sided error with no
false negatives and failure probability exponentially small in~$k$. 
\end{theorem}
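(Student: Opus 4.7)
The plan is to combine Lemma~\ref{lemma:compresscompress} with the known approximation algorithms for \OCT and \EBip, tuning the error parameter~$\epsilon$ so that the failure probability drops off exponentially in~$k$ while the output size stays within the claimed bounds. Throughout, I set~$\epsilon=2^{-k}$, so that~$\log(1/\epsilon)=k$ and the error is~$2^{-\Omega(k)}$.

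For \OCT, I would first handle the trivial regime~$k\leq\log n$ by running the Reed-Smith-Vetta algorithm in time~$\BigOh(3^kkmn)=n^{\BigOh(1)}$ (deterministic) and emitting a constant-size canonical \yes- or \no-instance; this incurs no error. In the main regime~$k>\log n$, I invoke the~$\BigOh(\sqrt{\log n})$-approximation of Agarwal et al.~\cite{AgarwalCMM05-approx} to produce an odd cycle transversal~$X$. If~$|X|$ exceeds~$ck^{3/2}$ for the appropriate constant~$c$, then since~$|X|\leq\BigOh(\sqrt{\log n})\cdot\mathrm{OPT}\leq\BigOh(\sqrt{k})\cdot\mathrm{OPT}$, we can safely output a canonical \no-instance; otherwise~$|X|=\BigOh(k^{3/2})$ is a bipartization set, and Lemma~\ref{lemma:compresscompress} applied to~$(G,k)$,~$X$, and~$\epsilon=2^{-k}$ yields a compression of total bit-size
\[
\BigOh\bigl(|X|^2(|X|+\log 1/\epsilon)\bigr)=\BigOh\bigl(k^3\cdot(k^{3/2}+k)\bigr)=\BigOh(k^{4.5}),
\]
with one-sided error~$2^{-k}$ and no false negatives, as required.

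For \EBip, I would instead invoke the~$\BigOh(\log\mathrm{OPT})$-approximation of Avidor and Langberg~\cite{AvidorL07-approx-klogk}, yielding an edge bipartization set of size~$\BigOh(k\log k)$ provided the instance is a \yes-instance (and otherwise we output a canonical \no-instance, as above). A standard reduction turns an \EBip-instance with approximate solution of size~$m'$ into an equivalent \OCT-instance with approximate solution of size~$\BigOh(m')$, for instance by subdividing every edge and placing the subdivision vertices of the approximate cut into the new transversal. Applying Lemma~\ref{lemma:compresscompress} to this derived \OCT-instance with~$|X|=\BigOh(k\log k)$ and~$\epsilon=2^{-k}$ gives a compression of size
\[
\BigOh\bigl((k\log k)^2(k\log k+k)\bigr)=\BigOh(k^3\log^3 k)=\tilde\BigOh(k^3),
\]
again with one-sided error~$2^{-k}$.

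The only point where something could go wrong is the error analysis: Lemma~\ref{lemma:compresscompress} guarantees no false negatives, but the choice of approximate solution~$X$ depends on the instance, and the size cap in Step~\ref{kernel:step:approx} of the kernelization scheme must be set so that every genuine \yes-instance survives it. This is the step I expect to be the main (though still routine) obstacle, and it is handled by the above approximation-ratio arguments: a \yes-instance has~$\mathrm{OPT}\leq k$, so the approximation returns a set of size at most the chosen threshold, after which the matroid machinery of Corollary~\ref{corollary:smallrepresentedmatroid}, propagated through Lemma~\ref{lemma:compresscompress}, preserves all \yes-instances with probability~$1$ and rejects \no-instances with probability~$1-2^{-k}$.
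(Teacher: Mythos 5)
Your \OCT argument is exactly the paper's: run Reed--Smith--Vetta when~$k\leq\log n$, otherwise use the Agarwal et al.\ approximation to obtain~$X$ with~$|X|=\BigOh(k^{3/2})$ (rejecting if the bound is exceeded), and feed~$(G,k)$,~$X$ and~$\epsilon=2^{-k}$ into Lemma~\ref{lemma:compresscompress}; the size and error accounting~$\BigOh(|X|^2(|X|+\log 1/\epsilon))=\BigOh(k^{4.5})$ is correct and matches the paper, which indeed only pays a constant factor for making the error exponentially small.

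The gap is in the \EBip half, specifically in the reduction you sketch. Subdividing every edge (once) turns every cycle of length~$\ell$ into one of length~$2\ell$, so the resulting graph is always bipartite and the reduction is vacuous. The natural parity-preserving fix -- subdividing each edge twice -- still does not give an equivalent \OCT instance, because the \OCT solver may delete \emph{original} vertices cheaply: take~$t$ triangles sharing a single hub vertex~$v$ and pairwise edge-disjoint. Its edge bipartization number is~$t$, but after double subdivision deleting the single vertex~$v$ makes the graph bipartite, so for any~$1\leq k<t$ the \EBip instance is \no while your reduced \OCT instance is \yes. This produces false positives with probability~$1$, which violates the one-sided-error guarantee you are trying to prove. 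The standard ``easy reduction'' the paper alludes to must additionally make the original vertices useless as deletions, e.g.\ replace each edge~$\{u,v\}$ by a path with two new internal vertices \emph{and} replace each original vertex by~$k+1$ false twins (so that killing the cycles through an original vertex by deleting copies costs more than~$k$); then deleting one internal edge-vertex corresponds exactly to deleting that edge, the optimum values agree for all values up to~$k$, and your mapping of the approximate edge solution of size~$\BigOh(k\log k)$ to the corresponding edge-vertices gives the bipartization set~$X$ needed for Lemma~\ref{lemma:compresscompress}, after which your~$\tilde\BigOh(k^3)$ size bound goes through as written.
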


The target problems of the compressions are constrained minimization problems over the rank of a matroid. We omit the precise definitions, but it should be clear that the problems can be made well-defined, and that they are in \NP.
As discussed in Section~\ref{section:prelim:kernels}, using \NP-completeness of \OCT and \EBip, we get the following polynomial \coRP-kernelization results.

\begin{corollary}
\OCT, \EBip, \textsc{Balanced Subgraph}, and \textsc{Tanglegram Layout} (see~\cite{BockerHTW09}) have polynomial \coRP-kernelizations.
\end{corollary}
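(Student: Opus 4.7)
The plan is to upgrade the randomized compressions given by the preceding Theorem into \coRP-kernelizations using the Karp-reduction trick of Section~\ref{section:prelim:kernels}, and to route \textsc{Balanced Subgraph} and \textsc{Tanglegram Layout} through \OCT or \EBip via standard reductions.

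First, for \OCT: the Theorem produces a randomized compression into an auxiliary decision problem $\Q'$ (implicit in Lemma~\ref{lemma:compresscompress}) concerning ranks of submatroids of the matroid represented by the output matrix $A$. I would begin by casting $\Q'$ explicitly as an \NP problem. A nondeterministic verifier can guess a subset $U\subseteq X$, a valid split $(S,T)$ of $U$, a target cut value, and a collection of columns of $A$ witnessing the required independence/cut bound; independence is checkable in polynomial time by rational linear algebra, so this places $\Q'$ in \NP. Since \OCT is \NP-complete, there is a polynomial-time Karp reduction from $\Q'$ back into \OCT producing instances whose size is polynomial in $|A|$, and hence polynomial in $k$. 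Composing the compression with this Karp reduction yields a randomized polynomial-time map from \OCT-instances to \OCT-instances of polynomial size in $k$.

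Second, I would verify the \coRP conditions. The compression from the Theorem is one-sided with no false negatives, so \yes-instances are mapped to \yes-instances of $\Q'$ with probability $1$, and \no-instances are mapped to \no-instances with probability at least $1/2$ (actually exponentially close to $1$, but $1/2$ already suffices for the definition). Because the subsequent Karp reduction is deterministic and preserves \yes/\no, both probabilities transfer verbatim to the final \OCT-instance. The same argument, with \EBip in place of \OCT, handles \EBip. For \textsc{Balanced Subgraph} and \textsc{Tanglegram Layout} I would first apply known polynomial-time, parameter-polynomial reductions into \OCT or \EBip: \textsc{Balanced Subgraph} by its standard equivalence with the signed-graph edge-deletion formulation of \EBip, and \textsc{Tanglegram Layout} by the reduction of~\cite{BockerHTW09}. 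Feeding the resulting instance through the compression-plus-Karp-reduction pipeline above then yields \coRP-kernelizations for all four problems.

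The one place that requires real care is formally pinning down $\Q'$ so that its membership in \NP, and hence the applicability of Karp reduction, is indisputable; the matrix $(A,k)$ must be interpreted as a crisp decision problem with short, polynomial-time-verifiable certificates. Given the explicit matrix representation, the flow-cut duality driving Lemma~\ref{lemma:rsv}, and the fact that linear-algebraic independence over the rationals is in \P, this is essentially bookkeeping rather than a genuine obstacle, but it is the single step where the argument must be written out with care.
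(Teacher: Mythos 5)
Your route is the paper's own (argue that the compression target $\Q'$ is in \NP, Karp-reduce back using \NP-completeness of the original problems with unary parameter, observe that a deterministic reduction preserves the one-sided error, and route \textsc{Balanced Subgraph} and \textsc{Tanglegram Layout} through \OCT/\EBip), but the one step you yourself flag as needing care is where your argument, as written, breaks. By Lemma~\ref{lemma:rsv}, $(A,k)$ must encode the question whether there exist $U\subseteq X$ and a valid split $(S,T)$ with $|X\setminus U|+\delta\leq k$, where $\delta$ is a \emph{minimum} cut, i.e.\ by Menger the \emph{maximum} size of a set $T'\subseteq T$ linked to $S$ in $G'-\X{X\setminus U}$, which in the matroid is a maximum over independent column sets of a prescribed shape. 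A \yes-certificate therefore has to establish an \emph{upper} bound on this maximum. Your proposed witness, ``a collection of columns of $A$ witnessing the required independence/cut bound,'' certifies only a \emph{lower} bound on $\delta$ (namely $\delta\geq|T'|$), which is the wrong direction; worse, independent column sets of the prescribed shape with few $T$-columns always exist (take $T'=\emptyset$), so a verifier that accepts upon seeing at most $k-|X\setminus U|$ such independent columns would accept every instance. As literally described, your \NP-verifier is unsound.

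The repair --- and the reason the paper can dismiss this as a ``constrained minimization over the rank of a matroid'' that is clearly in \NP --- is that no certificate for $\delta$ is needed at all. For fixed $U$, $S$, $T$, the column sets arising in the proof of Lemma~\ref{lemma:compresscompress} all have the form $I_0\cup\{\text{columns for }T'\}$, where $I_0$ is a single set of columns determined by $S$ and $\X{X\setminus U}$ alone (and independent in the true gammoid); hence $\delta$ is just the rank of the $T$-columns over the contraction by $I_0$, i.e.\ $\mathrm{rank}(I_0\cup T\text{-columns})-\mathrm{rank}(I_0)$, computable by Gaussian elimination over the rationals. So the only nondeterminism in $\Q'$ is the choice of $(U,S,T)$; everything else is a deterministic polynomial-time rank computation, and since representation errors can only lower ranks, the no-false-negatives property transfers exactly as you claim. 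One further small omission: for \textsc{Balanced Subgraph} and \textsc{Tanglegram Layout} your pipeline ends in an \OCT (resp.\ \EBip) instance, which is only a \coRP-\emph{compression} of those problems; to get a \coRP-\emph{kernelization} in the sense of Section~\ref{section:prelim:kernels} you need one final Karp reduction from \OCT/\EBip back into the original problem, using that each of these problems is itself \NP-hard with the parameter coded in unary.
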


\subsection{A Note on Lower Bounds} \label{section:kernellower}

Finally, we remark that although kernelizations and kernelization lower bounds
are usually expressed in terms of deterministic results, the type of
randomized kernelizations we produce here (i.e., one-sided error with no false
negatives) does in fact fit within the lower bounds framework of Bodlaender et
al.~\cite{BodlaenderDFH09} and Fortnow and Santhanam~\cite{FortnowS11}, as the proofs implicitly exclude
also co-nondeterministic kernels.
Since our \coRP kernelization is a special case of this,
our tools cannot be used to escape the lower bounds.
Dell and van Melkebeek~\cite{DellM10} noted the connection to co-nondeterministic kernels,
and brought it further by giving
lower bounds in terms of the amount of communication needed to solve
a problem in an oracle setting, where a polynomially bounded but
co-nondeterministic player communicates with an computationally unbounded oracle.\footnote{Harnik and Naor~\cite{HarnikN10} credit an extension of~\cite{FortnowS11} to any one-sided error to unpublished work of Chen and M\"uller.}
They provide concrete lower bounds for various problems, among others implying the following.
\begin{theorem}[\cite{DellM10}]
Let~$\epsilon>0$. No co-nondeterministic kernel or compression for OCT can achieve a total size of~$\BigOh(k^{2-\epsilon})$ unless \containment and the polynomial hierarchy collapses.
\end{theorem}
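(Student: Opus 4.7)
The statement is due to Dell and van Melkebeek~\cite{DellM10}, and I would prove it within their oracle-communication framework, which refines the distillation-based technique of Bodlaender et al.~\cite{BodlaenderDFH09} and Fortnow and Santhanam~\cite{FortnowS11}. The plan has two ingredients: a general lower bound showing that \NP-complete problems cannot be solved by very short co-nondeterministic protocols, and a polynomial-parameter cross-composition into \OCT whose parameter grows only as the square root of the number of packed instances.

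The first ingredient is the main technical theorem of~\cite{DellM10}: if an \NP-complete problem admits a co-nondeterministic oracle communication protocol of total cost $o(n)$, where $n$ is the input length, then \containment and the polynomial hierarchy collapses. A co-nondeterministic compression of a parameterized problem $\Q$ of size $s(k)$ immediately yields such a protocol: the polynomially bounded player runs the compression and sends the compressed instance to the unbounded oracle, which co-nondeterministically checks non-membership in the target problem. Thus it suffices to rule out such a protocol of cost $\BigOh(k^{2-\epsilon})$ for \OCT.

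The second ingredient is a cross-composition from a suitable \NP-hard source, such as \textsc{Vertex Cover} or \textsc{3-SAT}, into \OCT. I would aim for a reduction that takes $t$ instances of size $n$ and produces a single \OCT instance of size polynomial in $t$ and $n$, with parameter $k = \BigOh(\sqrt{t}\cdot\mathrm{poly}(n))$, such that the \OCT instance is \yes if and only if at least one of the inputs is \yes. The natural template is a table composition: arrange the $t$ inputs on a $\sqrt{t}\times\sqrt{t}$ grid, and attach row- and column-selector gadgets that force any small bipartization set first to commit to one row index and one column index, thereby singling out one input instance whose \textsc{Vertex Cover} witness must be encoded by the remaining bipartization budget. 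Paying $\BigOh(\sqrt{t})$ for the two selectors and an additional $\mathrm{poly}(n)$ inside the selected instance gives the claimed parameter. Plugging this into the communication lower bound: an $\BigOh(k^{2-\epsilon})$ compression would yield a protocol of cost $\BigOh(t^{1-\epsilon/2}\cdot\mathrm{poly}(n))$, which is $o(t)$ when $t$ is chosen as a sufficiently large polynomial in $n$, contradicting the lower bound.

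The main obstacle is the cross-composition itself. Unlike \textsc{Vertex Cover} or \textsc{Feedback Vertex Set}, \OCT lacks a convenient local obstruction: its yes/no status is governed by parity and global odd-cycle structure, and one has to engineer gadgets whose odd cycles can be destroyed only by paying exactly along one row selector, exactly along one column selector, and along a bipartization set of the corresponding input instance, and no cheaper. Proving soundness (that no bipartization under budget exists when all inputs are \no) and completeness with tight $\sqrt{t}$ scaling is the combinatorial heart of the argument. If only a weaker $t^{1/c}$ scaling with $c<2$ were achieved, the framework would yield only a lower bound of the form $\BigOh(k^{c-\epsilon})$, which is why the quadratic bound requires the grid-style composition specifically.
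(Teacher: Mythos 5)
First, note that the paper does not prove this theorem at all: it is imported verbatim from Dell and van Melkebeek~\cite{DellM10}, where the \OCT bound is obtained not by a bespoke composition into \OCT but by (i) proving the $\BigOh(k^{2-\epsilon})$ communication/compression lower bound for \textsc{Vertex Cover} (via their Complementary Witness Lemma applied to a quadratically efficient packing of the OR of many instances), and then (ii) transferring it to \OCT by a polynomial-time reduction that preserves the parameter up to a constant (e.g., adding a new triangle vertex per edge turns a \textsc{Vertex Cover} instance $(G,k)$ into an equivalent \OCT instance with the same budget $k$). That transfer step completely sidesteps the need to design odd-cycle selector gadgets, which is where your plan concentrates all of the difficulty.

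Your proposal has a genuine gap in exactly that place: the $\sqrt{t}\times\sqrt{t}$ grid composition with row- and column-selectors for \OCT is asserted as a template but never constructed, and you yourself identify it as ``the combinatorial heart.'' This is not a routine detail --- forcing a deletion set of size $\BigOh(\sqrt{t})+\mathrm{poly}(n)$ to commit to one row and one column, while guaranteeing that no cheaper deletion set destroys all odd cycles when every input is a \no instance, is precisely the kind of instance-selector behavior that vertex-deletion problems resist, and no such weak composition for \OCT is given (nor is one known to be needed, given the reduction route above). In addition, your first ingredient is misstated: it is not true that any \NP-complete problem with a co-nondeterministic oracle communication protocol of cost $o(n)$ \emph{in the input length} forces \containment --- padded \NP-complete languages admit such protocols trivially. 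The correct statements in~\cite{DellM10} are either the OR-of-$t$-instances form (cost must be roughly linear in $t$) or problem-specific bounds measured in the number of variables/vertices, such as $\BigOh(n^{d-\epsilon})$ for $d$-CNF satisfiability and $\BigOh(n^{2-\epsilon})$ for \textsc{Vertex Cover}; your argument needs one of these forms, together with the missing composition (or, more simply, the parameter-preserving reduction from \textsc{Vertex Cover}), to go through.
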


It seems difficult to go below an~$\BigOh(k^3)$ bound using our methods.  Thus, we
leave it as an open question whether the upper or the lower bound on the total
compression size can be improved for \OCT and \EBip.

\section{Conclusion}\label{section:conclusion}

We have presented randomized polynomial kernelizations for \OCT and \EBip. The
key contribution is the introduction of matroids into kernelization, by
encoding the compression step of 
the Reed-Smith-Vetta
algorithm for \OCT~\cite{ReedSV04}, by means of a matroid.  
This leads to a compression of the problem into size~$\BigOh(k^{4.5})$ for \OCT
and~$\tilde \BigOh(k^3)$ for \EBip, which is easily turned into a kernelization 
by back-reductions to the original problems. The kernelization
has one-sided error, producing no false negatives, and the failure rate can be
made exponentially small in~$k$ at only a constant factor cost to the size.
While this essentially settles the question about existence of polynomial kernels,
the more practical result seems to be the output of the compression. 
Not only is compression to any set the
more robust notion (cf.~\cite{DellM10}); the target problem is native in one 
of the most well-studied areas of mathematics and computer science.
The compression may also point the way for where to look for direct, combinatorial kernelizations for the problem.

It is interesting that the results of Fortnow and Santhanam~\cite{FortnowS11}
can be seen to exclude also co-nondeterministic
compressions~\cite{DellM10}. Thus, our technique is not a way of avoiding the
lower bounds given by~\cite{BodlaenderDFH09,FortnowS11}, but a way of settling
problems for which neither such lower bounds nor a polynomial kernelization or
compression are known.

We close with some open problems. It is still interesting whether there exist
deterministic polynomial kernelizations for \OCT and \EBip, either as a
derandomization of our methods, or (which would have independent interest) as
a properly combinatorial kernelization.  Additionally, for both problems, there is the question
of the correct size bound.  We note that
a~$\BigOh$(polylog(OPT))-approximation for OCT, which is consistent with
approximation theory lower bounds, would improve our result to~$\tilde
\BigOh(k^3)$, but this still leaves a gap to the~$\BigOh(k^{2-\epsilon})$ lower
bound given by Dell and van Melkebeek~\cite{DellM10}.
Finally, existence of (randomized) polynomial kernels is an
exciting question for several related problems, including 
\DFVS, \textsc{Multiway Cut}, and \textsc{Edge} and \textsc{Vertex Multicut}.
The robust way in which the introduction of matroids into kernelization
helped to settle the question for \OCT gives reason to believe that it will play
a key role for some of these problems as well.

\bibliographystyle{abbrv}

\end{document}